\def\BibTeX{{\rm B\kern-.05em{\sc i\kern-.025em b}\kern-.08em
    T\kern-.1667em\lower.7ex\hbox{E}\kern-.125emX}}
\newtheorem{theorem}{Theorem}
\newtheorem{lemma}{Lemma}
\newtheorem{remark}{Remark}
\DeclareMathOperator*{\argmin}{arg\,min}
\begin{document}

\title{Low-Complexity Blind Estimator of SNR and MSE for mmWave Multi-Antenna Communications\\
\thanks{This work was supported by Institute of Information \& communications Technology Planning \& Evaluation (IITP) grant funded by the Korea government (MSIT) (No. RS-2024-00442085, No. RS-2024-00398157).}
}

\author{\IEEEauthorblockN{Hanyoung Park and Ji-Woong Choi}
\IEEEauthorblockA{
\textit{Department of Electrical Engineering and Computer Science,
DGIST,
Daegu, South Korea} \\
e-mail: \{prkhnyng, jwchoi\}@dgist.ac.kr}
}

\maketitle

\begin{abstract} 
To enhance the robustness and resilience of wireless communication and meet performance requirements, various environment-reflecting metrics, such as the signal-to-noise ratio (SNR), are utilized as the system parameter. To obtain these metrics, training signals such as pilot sequences are generally employed. However, the rapid fluctuations of the millimeter-wave (mmWave) propagation channel often degrade the accuracy of such estimations. To address this challenge, various blind estimators that operate without pilot have been considered as potential solutions. However, these algorithms often involve a training phase for machine learning or a large number of iterations, which implies prohibitive computational complexity, making them difficult to employ for real-time services and the system less resilient to dynamic environment variation. In this paper, we propose blind estimators for average noise power, signal power, SNR, and mean-square error (MSE) that do not require knowledge of the ground-truth signal or involve high computational complexity. The proposed algorithm leverages the inherent sparsity of mmWave channel in beamspace domain, which makes the signal and noise power components more distinguishable.
\end{abstract}

\begin{IEEEkeywords}
Resilience, blind estimation, SNR, MSE, mmWave, sparsity, beamspace.
\end{IEEEkeywords}

\section{Introduction}
In wireless communication systems, various parameters reflecting the propagation environment are utilized to make the system resilient to the dynamic variation of the channel environment and signal distortions. 
For example, noise power and signal-to-noise ratio (SNR) are required for various applications such as 
precoding\cite{ref:precoding}, channel estimation\cite{ref:channelest}, 
equalization\cite{ref:equalization}, and scheduling\cite{ref:scheduling}. 
These applications help achieve higher quality-of-service (QoS) and improvements in performance, such as error rate, data rate, energy efficiency, and latency.
Most of these frameworks are based on parameter estimation using pilot sequences, which facilitates accurate system estimation.

Meanwhile, with the advances in wireless communication technology, abundant bandwidth is required to support broader services and meet increasing data rate demands.
Therefore, millimeter-wave (mmWave) and terahertz (THz) systems have been considered for the solution of this demand. 
Due to the properties of high-frequency electromagnetic waves, it has shorter channel coherence time in the mmWave and THz frequency bands. 
Moreover, due to the nature of poor scattering and strong path loss, it is highly sensitive to the change of blockage\cite{ref:mmWavesurvey}.
To make the system resilient to this faster fluctuation, more frequent transmission of pilot sequence to perform estimation more frequently can be the intuitive solution.
However, this approach may lead to the pilot overhead increment, and it implies the inefficiency in terms of radio resource and reduction in data rate.

To meet this demand, blind estimators can be considered, which have been proposed in various prior research. 
The works done in \cite{ref:OFDM1, ref:OFDM2, ref:OFDM3} used the inherent characteristics of orthogonal frequency division multiplexing (OFDM), including the cyclic prefix and the periodic property of the OFDM waveform in modulation-specific manners. These estimators show sufficient estimation accuracy, but they require background information about the modulation scheme or have availability only for the OFDM waveform. 
The works done in \cite{ref:EM1} and \cite{ref:EM2} proposed blind estimation of noise power and SNR based on the expectation-maximization (EM) algorithm. Covariance matrices of the received signal in multi-input multi-output (MIMO) systems are also utilized for blind SNR estimation\cite{ref:covariance1, ref:covariance2}. Also, deep learning-based estimators have been proposed in previous works \cite{ref:DL1, ref:DL2, ref:DL3}. However, these algorithms require a large number of iterations, a training phase for machine learning, numerous samples, or excessive computational burden. 
These requirements make the corresponding algorithms unsuitable for real-time operation, particularly in resilient multi-antenna mmWave and THz systems where low-latency and robustness are critical.
Also, the increment in the number of antennas makes the costs of these algorithms more prohibitive.
To this end, a low-complexity estimator has been proposed, but it suffers from limited estimation accuracy\cite{ref:lowblind}.

Therefore, in this paper, we propose low-complexity, computationally-efficient blind estimators of average noise power and SNR with improved estimation accuracy by utilizing the signal sparsity of mmWave multi-antenna propagation environments in beamspace domain. 
The sparsity of signal in mmWave channel environment due to the nature of poor scattering makes the power from the signal concentrated in a few beams, and it makes separation easier to distinguish the power of signal and noise. 
Therefore, we exploit this characteristic to distinguish the signal and noise and estimate their power using the element-wise gain differences of the signal in beamspace domain. 
In addition, we propose a blind mean-square error (MSE) estimator which can be utilized for beamspace denoiser without knowledge of ground-truth signal. 
For MSE evaluation in blind scenarios, Stein's unbiased risk estimate (SURE) can be utilized, but it requires knowledge of noise variance\cite{ref:sure}. 
Accordingly, we devise blind MSE estimator by integrating SURE and the noise power estimator proposed in this paper.
The proposed algorithms demonstrate improved estimation accuracy with a single snapshot of the signal captured by the multi-antenna receiver without relying on iterative optimization or deep learning, enabling the system to track its parameters more quickly and enhance its resilience against the harsh mmWave propagation environment.

\subsubsection*{Notation} Uppercase and lowercase bold symbols denote matrices and column vectors, respectively. For a matrix $\mathbf{A}$, $\mathbf{a}_i$ indicates its $i$th column vector,
$a_j$ represents its $j$th element.
$(\cdot)^H$ presents Hermitian transpose, and $\mathbf{I}_N$ is $N\times N$ identity matrix.
$\|\cdot\|$ represents the $\ell_2$ norm and $\|\cdot\|_1$ is the $\ell_1$ norm.
\section{System Model}
In our model, we consider a mmWave single-input multiple-output (SIMO) uplink system. 
The base station (BS) has $M$ receiver antennas, and a user equipment (UE) has a single antenna.
The received signal vector is determined as
\begin{equation}\label{eqn:systemmodel}
    \mathbf{y}=\mathbf{hs}+\mathbf{n}=\mathbf{x}+\mathbf{n},
\end{equation}
where $\mathbf{h}\in\mathbb{C}^{M\times 1}$ is the channel vector, $\mathbf{s}\in\mathbb{C}$ is the transmit signal of the UE, $\mathbf{n}\in\mathbb{C}^M$ is the i.i.d. zero-mean additive white Gaussian noise (AWGN) vector with variance $N_0$, and $\mathbf{x}\in\mathbb{C}^M$ is the noiseless signal after the propagation through the channel.
We assume that the channel vector is modeled as\cite{ref:poisson}
\begin{equation}
    \mathbf{h}=\sum_{\ell=1}^{L} g_{\ell}\mathbf{a}(\phi_{\ell}),
\end{equation}
where $L$ is the number of propagation paths including a potential line-of-sight (LoS) element, $g_{\ell}$ is the complex channel gain, $\mathbf{a}(\cdot)\in\mathbb{C}^M$ is the steering vector, and $\phi_{\ell}$ is the spatial frequency of the $\ell$-th propagation path of the UE\cite{ref:channel}.

\subsection{Average Noise Power Estimator}
The average noise power of the system is generally defined as
\begin{equation}
    N_0\triangleq \frac{1}{M} \mathbb{E}[\|\mathbf{n}\|^2].
\end{equation}

Here, we propose a low-complexity blind estimator of the average noise power, which is derived by Algorithm \ref{alg:alg1}.
First, to leverage sparsity in the beamspace domain, the received signal is transformed from the antenna domain to the beamspace by discrete Fourier transform (DFT), which is presented as
\begin{equation}
    \mathbf{y}_\mathsf{b}=\mathbf{Fy},
\end{equation}
where $\mathbf{F}$ is the normalized DFT matrix so that the power calculation in the beamspace domain is equivalent to that in the antenna domain without additional scaling, i.e. $\|\mathbf{y}\|^2=\|\mathbf{y}_\mathsf{b}\|^2$.
Subsequently, its element-wise power sequence $\mathbf{p}_{\mathbf{y}_\mathsf{b}}=|\mathbf{y}_\mathsf{b}|^2$ is sorted in ascending order.
Since the propagation channel in the beamspace domain, $\mathbf{h}_\mathsf{b}=\mathbf{Fh}$, is sparse due to the nature of poor scattering, it leads to the sparsity of the noiseless signal $\mathbf{x}_\mathsf{b}=\mathbf{Fx}$ in the beamspace representation.
Accordingly, its noisy observation $\mathbf{y}_\mathsf{b}$ shows a similar property\cite{ref:mmWavesurvey}.
Therefore, the power of the elements corresponding to the signal is significantly larger compared to that of the elements corresponding to the noise.
Furthermore, the signal power is predominantly concentrated in a limited portion in the beamspace, while the other shows near-zero values.
Fig.~\ref{fig:beamspace} shows the power of $\mathbf{p}_{\mathbf{y}_\mathsf{b}}$ and its sorted version $\mathbf{p}_{\mathbf{y}_\mathsf{b}}^\textnormal{sorted}$ in the beamspace domain. 
It shows that the concentrated power makes it easier to distinguish whether power elements originate from the signal or noise.
To exploit this property, we use the finite difference of the sorted power sequence to separate noise components from the noisy observation of signal based on power difference, which is determined as
\begin{equation}
    \Delta\mathbf{p}^\text{sorted}_{\mathbf{y}_\mathsf{b},(i)}=\mathbf{p}^\text{sorted}_{\mathbf{y}_\mathsf{b},(i+1)}-\mathbf{p}^\text{sorted}_{\mathbf{y}_\mathsf{b},(i)},
\end{equation}
where $\mathbf{p}^\text{sorted}_{\mathbf{y}_\mathsf{b},(i)}$ is the $i$th element of the sorted power sequence.
\begin{figure}
    \centering
    \subfigure[]{\includegraphics[width=0.9\columnwidth]{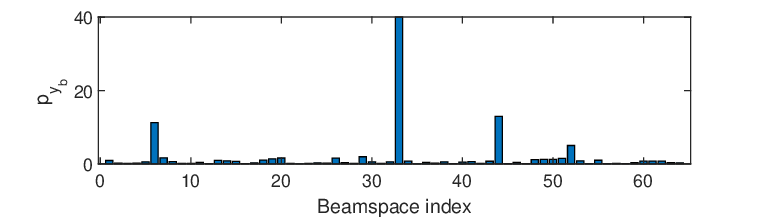}}
    
    \subfigure[]{\includegraphics[width=0.9\columnwidth]{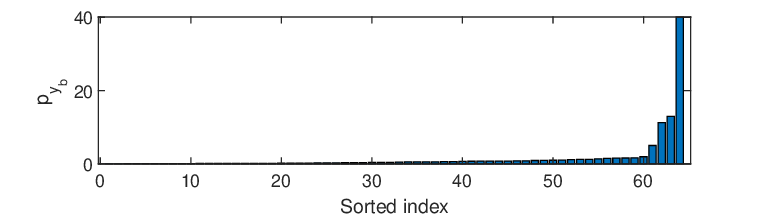}}
    \caption{Power of the (a) noisy signal in the beamspace domain and (b) its sorted version, for $M=64$ and $\text{SNR}=3$ dB.}
    \label{fig:beamspace}
\end{figure}
By the iteration, we find the first index that shows a larger finite difference which satisfies 
\begin{eqnarray}\label{eqn:threshold}
    \Delta\mathbf{p}^\text{sorted}_{\mathbf{y}_\mathsf{b},(i^*)}>\frac{\gamma}{i^*} \sum_{i=1}^{i^*} \mathbf{p}^\text{sorted}_{\mathbf{y}_\mathsf{b},(i)},
\end{eqnarray}
where $\gamma$ is the threshold parameter.
During the iteration, a temporary variable is utilized to store the temporal average value of the power, to integrate the loops into a single loop for reduction of the additional loop iteration.
Note that the elements corresponding to the indices from the first to the $i^*$th are classified as the noise power.
If there exists no value that satisfies \eqref{eqn:threshold}, this indicates that all power components are regarded as noise power. Therefore, we assign $i^*=M$ for this case.
After iteration for finding the index $i^*$, now finally the estimated noise power is calculated as
\begin{equation}
    \widehat{N}_0 = \frac{1}{i^*}\sum_{i=1}^{i^*} \mathbf{p}^\text{sorted}_{\mathbf{y}_\mathsf{b},(i)}.
\end{equation}
The complexity of this noise estimation algorithm is derived as follows. The algorithm requires \textit{(i) DFT}, \textit{(ii) sorting}, \textit{(iii) calculation of finite difference}, \textit{(iv) iteration to find }$i^*$, and \textit{(v) calculation of the mean power of the elements distinguished as noise}. 
Here, calculating the temporary mean power for each iterative step may lead to a prohibitive computational complexity. 
Accordingly, to reduce the computational complexity, we integrated \textit{(iii)--(v)} in one iteration, which requires $\mathcal{O}(M)$.
By employing fast Fourier transform (FFT), \textit{(i)} is conducted with complexity $\mathcal{O}(M\log M)$. Also, \textit{(ii)} can be performed by Quicksort with complexity $\mathcal{O}(M\log M)$. 
Consequently, the computational complexity of the total estimation process in big-O notation is $\mathcal{O}(M\log M)$.
This estimator calculates the average power of the elements which are classified as the noise elements in beamspace domain, without requiring the information of the ground-truth signal or pilot sequence.

\begin{algorithm}[b]
\caption{Blind Average Noise Power Estimator}\label{alg:alg1}
\SetAlgoLined
\textbf{input:} $\mathbf{p}^\textnormal{sorted}_{\mathbf{y}_\mathsf{b}}$

\textbf{initialize}
$i^*=M$

\For{$i=1,...,M-1$}{ 
    Calculate temporary average value of power
    \begin{equation}
        \texttt{avg\_temp} = (\texttt{avg\_temp} \times (i-1) + \mathbf{p}^\textnormal{sorted}_{{\mathbf{y}}_{\mathsf{b},(i)}})\times\frac{1}{i}.\nonumber
    \end{equation} 

    Calculate the finite difference of index $i$ 
    \begin{eqnarray}
        \Delta\mathbf{p}^\textnormal{sorted}_{{\mathbf{y}}_{\mathsf{b},(i)}} = \mathbf{p}^\textnormal{sorted}_{{\mathbf{y}}_{\mathsf{b},(i+1)}} - \mathbf{p}^\textnormal{sorted}_{{\mathbf{y}}_{\mathsf{b},(i)}}.\nonumber
    \end{eqnarray}
    
    \If{$\Delta\mathbf{p}^\textnormal{sorted}_{{\mathbf{y}}_{\mathsf{b},(i)}} \geq \gamma\times\textnormal{\texttt{avg\_temp}}$}{
        $i^*=i$

        \textbf{break}
    }
}

{$\widehat{N}_0=\texttt{avg\_temp}$}

\Return {The estimated noise power $\widehat{N}_0$}    

\end{algorithm}

\subsection{Average Signal Power, SNR, and MSE Estimator}
%
The average signal power of the system is defined as
\begin{equation}
    P_\mathbf{x}\triangleq\frac{1}{M}\mathbb{E}[\|\mathbf{x}\|^2].
\end{equation}
Also, based on this definition, the SNR is defined as
\begin{equation}
    \rho\triangleq \frac{P_\mathbf{x}}{N_0}= \frac{\mathbb{E}[\|\mathbf{x}\|^2]}{\mathbb{E}[\|\mathbf{n}\|^2]}.
\end{equation}
For the estimation of the average signal power, we propose a blind estimator as follows \cite{ref:lowblind},
\begin{equation}
    \widehat{P}_\mathbf{x} \triangleq \frac{1}{M}\max\left\{ \|\mathbf{y}\|^2 - M\widehat{N}_0 ,0 \right\}.
\end{equation}
Moreover, since SNR is defined as the ratio between the average signal power and the average noise power, the SNR estimator can then be computed using the estimated power values, $\widehat{P}_\mathbf{x}$ and $\widehat{N}_0$, as follows,
\begin{equation}
    \widehat{\rho} \triangleq \frac{\widehat{P}_\mathbf{x}}{\widehat{N}_0}.
\end{equation}

Meanwhile, we consider a denoising problem of the noisy signal $\mathbf{y}$ and its general accuracy metric, MSE.
For the denoiser function $\hat{\mathbf{x}}(\mathbf{y})$ which denoises the noisy observation $\mathbf{y}$, the MSE of the denoiser is
\begin{equation}
    \varepsilon^2\triangleq \frac{1}{M}\mathbb{E}[\|\hat{\mathbf{x}}(\mathbf{y}) - \mathbf{x}\|^2].
\end{equation}
Many denoising algorithms utilize MSE to find the optimal denoiser by solving the following optimization problem, which is presented as
\begin{equation}
    \hat{\mathbf{x}}^\star(\mathbf{y})=\argmin_{\hat{\mathbf{x}}(\mathbf{y})} \varepsilon^2.
\end{equation}
However, since this metric requires the knowledge of ground-truth signal $\mathbf{x}$, it cannot be utilized for the blind scenarios.
Fortunately, SURE, an unbiased estimate of the MSE, can be utilized in the blind scenarios under the Gaussian noise conditions\cite{ref:sure}.
Theorem \ref{thm:thm1} shows the unbiased estimate of can be derived by the SURE.
\begin{theorem}[SURE]\label{thm:thm1}
    Consider the signal $\mathbf{x}$ and its noisy observation $\mathbf{y}\sim\mathcal{CN}(\mathbf{x}, N_0\mathbf{I}_M)$. Let $\hat{\mathbf{x}}(\mathbf{y})$ be the denoiser function which estimates $\mathbf{x}$ from $\mathbf{y}$ with weak differentiability. Then,
    \begin{align}
        \mathcal{S} & = \frac{1}{M}\|\hat{\mathbf{x}}(\mathbf{y})-\mathbf{y}\|^2+{N}_0 \nonumber\\ 
    & +\frac{{N}_0}{M}\sum_{m=1}^M \left( \frac{\partial\textnormal{Re}\{\hat{x}_m(y_m)\}}{\partial\textnormal{Re}\{y_m\}} + \frac{\partial\textnormal{Im}\{\hat{x}_m(y_m)\}}{\partial\textnormal{Im}\{y_m\}}-2 \right)
    \end{align}
    is an unbiased estimate of MSE. Thus, $\mathbb{E}[\mathcal{S}]=\varepsilon^2$.
\end{theorem}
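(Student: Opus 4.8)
The plan is to follow the classical Stein's unbiased risk estimate argument, specialized to the circularly symmetric complex Gaussian model $\mathbf{y}\sim\mathcal{CN}(\mathbf{x},N_0\mathbf{I}_M)$, with the complex Stein identity supplying the divergence term. First I would rewrite the estimation error by inserting $\mathbf{y}$ inside the norm. Writing $\mathbf{n}=\mathbf{y}-\mathbf{x}$ and $\hat{\mathbf{x}}(\mathbf{y})-\mathbf{x}=(\hat{\mathbf{x}}(\mathbf{y})-\mathbf{y})+\mathbf{n}$, the squared norm expands as
\[
\|\hat{\mathbf{x}}(\mathbf{y})-\mathbf{x}\|^2=\|\hat{\mathbf{x}}(\mathbf{y})-\mathbf{y}\|^2+\|\mathbf{n}\|^2+2\,\textnormal{Re}\{(\hat{\mathbf{x}}(\mathbf{y})-\mathbf{y})^H\mathbf{n}\}.
\]
Dividing by $M$ and taking expectations, the first term is exactly the residual energy that appears in $\mathcal{S}$, while $\frac{1}{M}\mathbb{E}[\|\mathbf{n}\|^2]=N_0$ follows directly from the noise-power definition. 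Thus the entire claim reduces to evaluating the expected cross term.

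Next I would split the cross term as $\mathbb{E}[\textnormal{Re}\{\hat{\mathbf{x}}(\mathbf{y})^H\mathbf{n}\}]-\mathbb{E}[\textnormal{Re}\{\mathbf{y}^H\mathbf{n}\}]$. The second piece is elementary: since $\mathbf{y}^H\mathbf{n}=\mathbf{x}^H\mathbf{n}+\|\mathbf{n}\|^2$ and $\mathbf{n}$ is zero-mean, we get $\mathbb{E}[\textnormal{Re}\{\mathbf{y}^H\mathbf{n}\}]=\mathbb{E}[\|\mathbf{n}\|^2]=MN_0$, which contributes $-2N_0$ after the overall scaling $2/M$. The first piece is where weak differentiability is used. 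Decomposing each coordinate into real and imaginary parts, with $n_m=n_m^R+i n_m^I$ and $n_m^R,n_m^I$ independent real Gaussians of variance $N_0/2$, I would expand $\textnormal{Re}\{\hat{x}_m^*(y_m)\,n_m\}=\textnormal{Re}\{\hat{x}_m(y_m)\}\,n_m^R+\textnormal{Im}\{\hat{x}_m(y_m)\}\,n_m^I$ and apply the scalar Stein identity to each summand, giving
\[
\mathbb{E}\big[n_m^R\,\textnormal{Re}\{\hat{x}_m(y_m)\}\big]=\tfrac{N_0}{2}\,\mathbb{E}\!\left[\frac{\partial\textnormal{Re}\{\hat{x}_m(y_m)\}}{\partial\textnormal{Re}\{y_m\}}\right],
\]
together with the analogous identity for the imaginary component.

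Summing over $m$ and scaling by $2/M$, the $\hat{\mathbf{x}}^H\mathbf{n}$ piece becomes $\frac{N_0}{M}\sum_{m=1}^M\mathbb{E}[\partial\textnormal{Re}\{\hat{x}_m\}/\partial\textnormal{Re}\{y_m\}+\partial\textnormal{Im}\{\hat{x}_m\}/\partial\textnormal{Im}\{y_m\}]$; combining it with the $-2N_0$ from the $\mathbf{y}^H\mathbf{n}$ piece, and writing $-2N_0=-\frac{N_0}{M}\sum_{m=1}^M 2$, reproduces precisely the divergence term of $\mathcal{S}$ including the $-2$ offset. Adding the residual energy and the $N_0$ term then yields $\mathbb{E}[\mathcal{S}]=\varepsilon^2$. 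I expect the main obstacle to be the careful bookkeeping of the complex Stein identity: keeping track of the per-real-dimension variance $N_0/2$ and confirming that the derivatives in the statement agree with the real partial derivatives the identity produces, since a stray factor or sign there would simultaneously distort the coefficient of the divergence and the $-2$ correction.
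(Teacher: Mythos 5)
Your proposal is correct and follows essentially the same route as the paper: insert $\mathbf{y}$ into the norm, expand so that the residual term and the $N_0$ term appear directly, and convert the cross term into the divergence term via Stein's identity for the complex Gaussian. The only difference is that the paper outsources the cross-term identity to a cited Stein's Lemma, whereas you prove it inline by splitting into real and imaginary coordinates with per-dimension variance $N_0/2$ --- a self-contained (and sign-consistent) version of the same step.
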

\begin{proof}
    Please refer to Appendix \ref{app:1}.
\end{proof}
However, SURE requires knowledge of the noise power $N_0$. 
Accordingly, we propose a blind estimator utilizing SURE and the noise power proposed above, which is determined as
\begin{align}
    \widehat{\varepsilon^2}&\triangleq \frac{1}{M}\|\hat{\mathbf{x}}(\mathbf{y})-\mathbf{y}\|^2+\widehat{N}_0 \nonumber\\ 
    & +\frac{\widehat{N}_0}{M}\sum_{m=1}^M \left( \frac{\partial\textnormal{Re}\{\hat{x}_m(y_m)\}}{\partial\textnormal{Re}\{y_m\}} + \frac{\partial\textnormal{Im}\{\hat{x}_m(y_m)\}}{\partial\textnormal{Im}\{y_m\}}-2 \right).
\end{align}
\begin{remark}
    The proposed estimators do not require knowledge of $\mathbf{x}$ or $N_0$. 
    Accordingly, these estimators can be operated in the blind scenarios. 
    Alternatively, they require the estimated noise power $\widehat{N}_0$. It means that these estimators can be considered as the extended applications of the noise power estimator. Moreover, their estimation performance is influenced by that of the noise power estimator.
\end{remark}
\begin{remark}
    The proposed estimators for the average signal power, SNR, and MSE do not require a training phase for machine learning or iterative optimization, while only requiring low computational complexity of the proposed noise power estimate. Furthermore, they do not require multiple snapshots of received signal, which reduces the complexity.
\end{remark}
\begin{remark}
    The system model described in~\eqref{eqn:systemmodel} assumes a narrowband channel, which implies a flat-fading channel, but the proposed estimators can also be applied for wideband channels with frequency selectivity through subcarrier-wise estimation.
\end{remark}
\section{Simulation Results}
\subsection{Simulation Configuration}
In this section, numerical simulation results are analyzed to verify the performance of the proposed estimators. We consider the 64-element uniform linear array (ULA) at the BS, and its antenna spacing is half-wavelength.
For channel model, we employed QuaDRiGa mmMAGIC UMi\cite{ref:quadriga} with a carrier frequency of 50 GHz.
Without loss of generality, the ground-truth average noise power is assumed to be $N_0=1$ for all simulations.
Also, we performed the simulation with 10000 trials of Monte-Carlo runs.

To evaluate the performance of the MSE estimator, we consider the least absolute shrinkage and selection operator (LASSO) for denoiser, which is broadly used for sparse signal denoising and defined as 
\begin{equation}
    \hat{\mathbf{x}}^\star(\mathbf{y})=\argmin_{\hat{\mathbf{x}}\in\mathbb{C}^M} \frac{1}{2} \|\mathbf{y}-\mathbf{\hat{x}}\|^2+\lambda\|\hat{\mathbf{x}}\|_1,
\end{equation}
where $\lambda>0$ is the denoising threshold parameter.
The denoised result is the solution of the optimization problem, and its closed-form is derived in element-wise manner as follows \cite{ref:lasso}:
\begin{align}
    \hat{x}_m^\star(y_m)= \begin{cases}
        \frac{|y_m|-\lambda}{|y_m|}y_m & \textnormal{if } |y_m| > \lambda \\
        0 & \textnormal{if }|y_m|\leq\lambda,
    \end{cases}
\end{align}
and the optimal denoiser is derived by finding $\lambda$ which makes the $\varepsilon^2$ minimum.

For comparison with the previous work, we consider a single-snapshot-based blind estimator devised in \cite{ref:lowblind} which estimates the noise power based on statistical characteristics with the low computational complexity of $\mathcal{O}(M\log M)$ including FFT, without a training phase for machine learning or iteration.
In addition, we utilize the ground-truth values of the parameters to evaluate the performance of the proposed estimators.

\begin{figure}
    \centering
    \includegraphics[width=0.95\columnwidth]{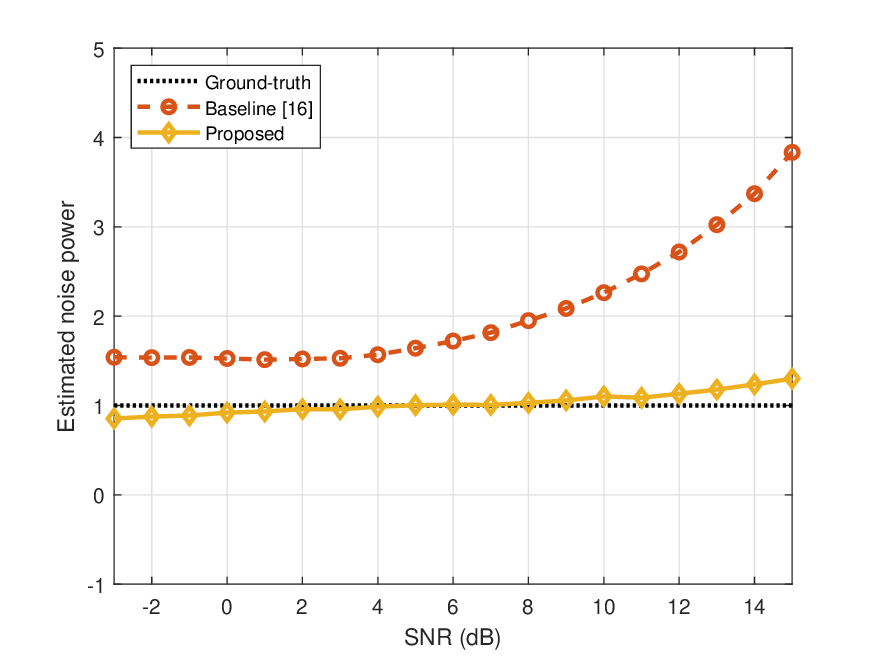}
    \caption{Estimated average noise power depending on SNR.}
    \label{fig:noise}
\end{figure}

\begin{figure}[]
    \centering
    \includegraphics[width=0.95\columnwidth]{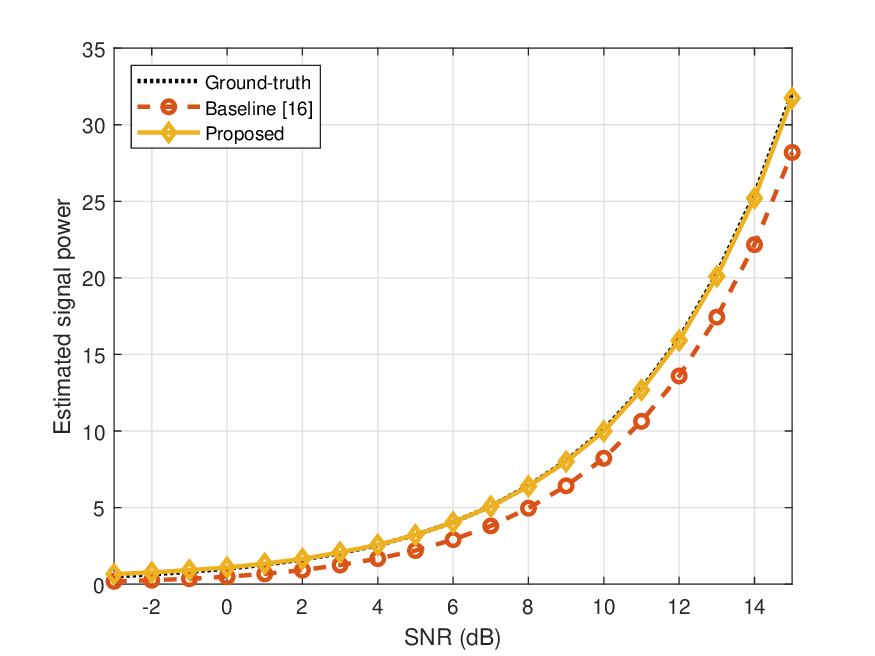}
    \caption{Estimated average signal power depending on SNR.}
    \label{fig:signal}
\end{figure}
{\begin{figure}[t]
    \centering
    \includegraphics[width=0.95\columnwidth]{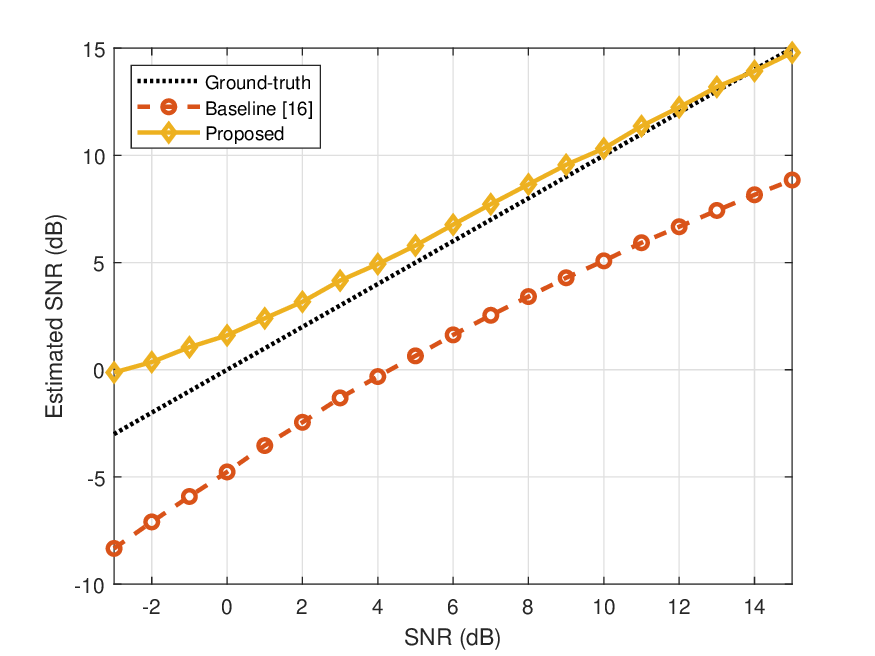}
    \caption{Estimated SNR depending on SNR.}
    \label{fig:snr}
\end{figure}
\begin{figure}[t]
    \centering
    \includegraphics[width=0.95\columnwidth]{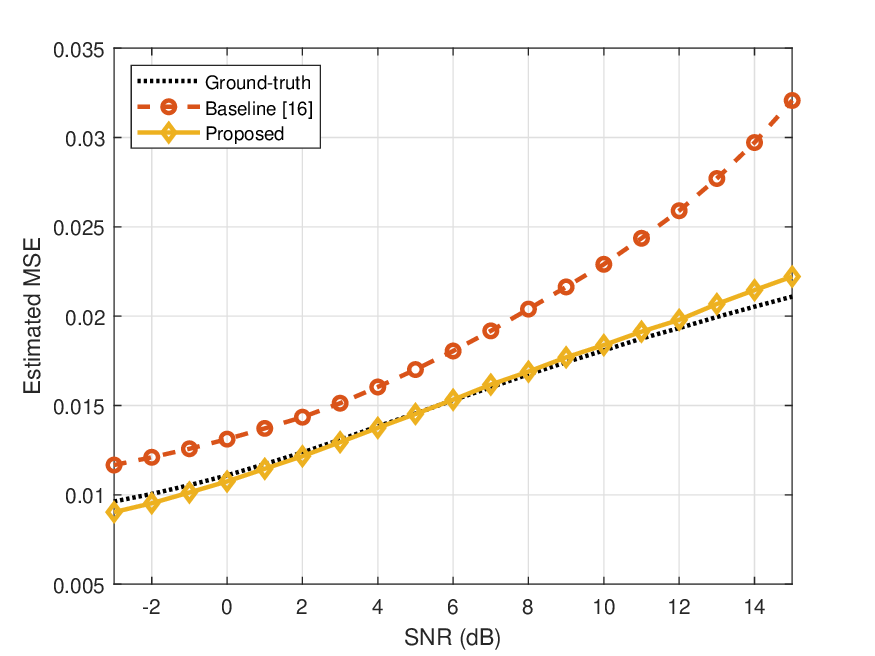}
    \caption{Estimated MSE depending on SNR.}
    \label{fig:mse}
\end{figure}}
\subsection{Performance Analysis}
Fig.~\ref{fig:noise} shows the estimated average noise power depending on SNR.
In lower SNR conditions, the proposed algorithm slightly underestimates the noise power. On the other hand, in higher SNR conditions, it slightly overestimates the noise power.
This is because the probability that noise components with large magnitudes are interpreted as signal increases under low-SNR conditions, while the probability that beam components with small magnitudes are treated as noise increases under high-SNR conditions.
Also, it shows better estimation accuracy than the baseline estimator because it estimates the noise power after separating signal and noise.
Meanwhile, Fig.~\ref{fig:signal} shows that the proposed signal power estimator provides nearly accurate results for all SNR conditions, while achieving improved estimation performance compared to the baseline algorithm.

Fig.~\ref{fig:snr} illustrates the estimated SNR depending on ground-truth SNR.
In higher SNR conditions, the proposed estimator slightly underestimates the SNR.
On the other hand, the proposed estimator overestimates the SNR in low-SNR cases.
It is because the estimator is based on the noise and signal power estimators, and the noise power estimator overestimates and underestimates the noise power in higher and lower SNR conditions, respectively.
Since the proposed estimator showed improved performance compared to the baseline algorithm in noise and signal power estimations, it also shows superior performance in SNR estimation as a result.

Fig.~\ref{fig:mse} shows the estimated MSE depending on SNR when $\lambda$ is 3. 
The proposed estimator slightly underestimates the MSE in low-SNR conditions and overestimates in high-SNR conditions. 
It is because this estimator is also based on the noise power estimator, and the estimation tendency of MSE follows that of the noise power estimates.
Moreover, the proposed algorithm shows better performance than the baseline estimator due to its better accuracy in noise power estimation.

\section{Conclusion}
This paper has proposed low-complexity estimator algorithms for average noise power, signal power, SNR, and MSE, which are critical for resilient communication. These algorithms can be operated in blind scenarios without the knowledge of the ground-truth signal.
Our work utilizes the sparsity of the received signal vector in beamspace domain due to the nature of poor scattering in mmWave systems, which makes the source of the power more distinguishable between signal and noise.
The simulation results have shown that the proposed algorithms can improve estimation performance compared to the baseline algorithms and maintain resilience with low computational complexity.
For future work, we will analyze the influence of estimation on applications and the performance of the proposed estimators in multi-user MIMO systems. In addition, in-depth research on theoretical analysis and bounds is part of our ongoing work.

\appendices
\section{Proof of Theorem \ref{thm:thm1}}\label{app:1}
The purpose of the proof is to show that SURE is an unbiased estimate of MSE $\varepsilon^2$. 
We begin by rewriting the MSE as follows: 
\begin{eqnarray}\label{eqn:msederived}
    \varepsilon^2& = & \frac{1}{M}\mathbb{E}[\|\hat{\mathbf{x}}(\mathbf{y}) - \mathbf{x}\|^2]\nonumber \\
    & = &\frac{1}{M}\mathbb{E}[\|\hat{\mathbf{x}}(\mathbf{y}) - \mathbf{y} - \mathbf{x} + \mathbf{y}\|^2]\nonumber \\
    & = &\frac{1}{M}\mathbb{E}[\|\hat{\mathbf{x}}(\mathbf{y}) - \mathbf{y}\|^2]+ \frac{1}{M}\mathbb{E}[\|\mathbf{y}-\mathbf{x}\|^2]\nonumber \\
    && + \frac{2}{M}\mathbb{E}[(\hat{\mathbf{x}}(\mathbf{y}) - \mathbf{y})^H(\mathbf{x}-\mathbf{y})]\nonumber \\
    & = &\frac{1}{M}\mathbb{E}[\|\hat{\mathbf{x}}(\mathbf{y}) - \mathbf{y}\|^2] + N_0\nonumber \\
    && + \frac{2}{M}\mathbb{E}[(\hat{\mathbf{x}}(\mathbf{y}) - \mathbf{y})^H(\mathbf{x}-\mathbf{y})], 
\end{eqnarray}
since $\mathbb{E}[\|\mathbf{y}-\mathbf{x}\|^2]=MN_0$. In the final form, the first two terms do not require the ground-truth signal $\mathbf{x}$. 
However, the last term has a dependency on $\mathbf{x}$. 
Fortunately, this term can be rewritten without explicit dependence on $\mathbf{x}$ by applying Lemma \ref{thm:steinslemma}.
\begin{lemma}[Stein's Lemma]\label{thm:steinslemma}
    Consider the signal $\mathbf{x}$ and its noisy observation $\mathbf{y}\sim\mathcal{CN}(\mathbf{x}, N_0\mathbf{I}_M)$. 
    Let $\hat{\mathbf{x}}(\mathbf{y})$ be a weakly differentiable denoiser function which estimates $\mathbf{x}$ from $\mathbf{y}$. Then, 
    \begin{align}
        & 2\mathbb{E}[(\hat{\mathbf{x}}(\mathbf{y}) - \mathbf{y})^H(\mathbf{x}-\mathbf{y})] \nonumber\\ 
        & = N_0 \mathbb{E} \left[ \left( \sum_{m=1}^M \frac{\partial\textnormal{Re}\{\hat{x}_m(y_m)\}}{\partial\textnormal{Re}\{y_m\}} + \frac{\partial\textnormal{Im}\{\hat{x}_m(y_m)\}}{\partial\textnormal{Im}\{y_m\}} \right) -2M \right].
    \end{align}
\end{lemma}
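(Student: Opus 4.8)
The plan is to reduce the complex identity to a family of one-dimensional real Stein identities, one for each real and imaginary coordinate of $\mathbf{y}$. First I would substitute $\mathbf{x}-\mathbf{y}=-\mathbf{n}$, so that the left-hand side becomes $-2\,\mathbb{E}[(\hat{\mathbf{x}}(\mathbf{y})-\mathbf{y})^H\mathbf{n}]$, and then expand the Hermitian inner product coordinate-wise. Writing $y_m=y_m^R+iy_m^I$, $\hat{x}_m=\textnormal{Re}\{\hat{x}_m\}+i\,\textnormal{Im}\{\hat{x}_m\}$, and $n_m=n_m^R+in_m^I$, the real part of each summand collapses to $(\textnormal{Re}\{\hat{x}_m\}-y_m^R)n_m^R+(\textnormal{Im}\{\hat{x}_m\}-y_m^I)n_m^I$. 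This is the step where the hypothesis $\mathbf{y}\sim\mathcal{CN}(\mathbf{x},N_0\mathbf{I}_M)$ enters decisively: circular symmetry forces $n_m^R$ and $n_m^I$ to be independent $\mathcal{N}(0,N_0/2)$ variables, which pins the per-coordinate variance at $N_0/2$ and kills the mixed expectations $\mathbb{E}[n_m^R n_m^I]$ and $\mathbb{E}[y_m^R n_m^I]$.

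Next I would split each coordinate contribution into a \emph{self} term and a \emph{denoiser} term. The self terms $\mathbb{E}[y_m^R n_m^R]=\mathbb{E}[(x_m^R+n_m^R)n_m^R]=N_0/2$, together with their imaginary counterparts, are elementary and account for the constant $-2M$ in the statement once the overall factor of $2$ is restored. The denoiser terms $\mathbb{E}[\textnormal{Re}\{\hat{x}_m\}\,n_m^R]=\mathbb{E}[\textnormal{Re}\{\hat{x}_m\}(y_m^R-x_m^R)]$ and $\mathbb{E}[\textnormal{Im}\{\hat{x}_m\}\,n_m^I]$ are precisely of the form handled by the scalar Stein identity.

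The analytic core is that univariate identity: for $Z\sim\mathcal{N}(\mu,\sigma^2)$ and weakly differentiable $g$ with integrable derivative, $\mathbb{E}[(Z-\mu)g(Z)]=\sigma^2\,\mathbb{E}[g'(Z)]$, which follows from integration by parts against the Gaussian density using $(z-\mu)\varphi(z)=-\sigma^2\varphi'(z)$. Applying it with $\sigma^2=N_0/2$ to each denoiser term, conditioning on all other coordinates and differentiating in the corresponding real direction, turns $\mathbb{E}[\textnormal{Re}\{\hat{x}_m\}\,n_m^R]$ into $\tfrac{N_0}{2}\mathbb{E}[\partial\,\textnormal{Re}\{\hat{x}_m\}/\partial y_m^R]$ and $\mathbb{E}[\textnormal{Im}\{\hat{x}_m\}\,n_m^I]$ into $\tfrac{N_0}{2}\mathbb{E}[\partial\,\textnormal{Im}\{\hat{x}_m\}/\partial y_m^I]$. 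Summing over $m$ and over both real directions, the factors $N_0/2$ combine with the leading $2$ to produce $N_0$ times the stated divergence, while the self terms supply the $-2M$, which reassembles the claimed equality.

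The step I expect to be the main obstacle is the rigorous justification of the integration by parts behind the scalar identity: weak differentiability of $\hat{\mathbf{x}}$ by itself does not guarantee that the boundary contributions at $\pm\infty$ vanish, so I would have to invoke the mild integrability of the partial derivatives of $\hat{\mathbf{x}}$ that makes those boundary terms disappear. A secondary but error-prone subtlety is the bookkeeping in the complex-to-real reduction: one must verify that only the real part of the Hermitian inner product feeds the real-valued MSE, and that the half-variance $N_0/2$ and the accompanying signs are tracked consistently across every coordinate, since a slip there propagates directly into the constant multiplying the divergence.
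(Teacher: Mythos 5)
The paper does not actually prove this lemma: its ``proof'' is a one-line citation to Chen, Goldstein, and Shao. Your proposal therefore takes a genuinely different and more self-contained route, namely the standard one: decompose the complex inner product into its $2M$ real coordinates, use circular symmetry to get independent $\mathcal{N}(0,N_0/2)$ components, and apply the scalar Gaussian integration-by-parts identity $\mathbb{E}[(Z-\mu)g(Z)]=\sigma^2\mathbb{E}[g'(Z)]$ to each denoiser term while the self terms $\mathbb{E}[y_m^R n_m^R]=N_0/2$ supply the constant. This is the right skeleton, it is essentially how the cited reference establishes the result, and your two flagged obstacles (vanishing of the boundary terms under weak differentiability plus integrability of the partials, and the fact that only $\mathrm{Re}\{(\hat{\mathbf{x}}-\mathbf{y})^H(\cdot)\}$ enters the MSE expansion) are exactly the points that need care.

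One concrete item you must resolve before the write-up closes: the sign. With $\mathbf{x}-\mathbf{y}=-\mathbf{n}$ the left-hand side as printed is $-2\mathbb{E}[(\hat{\mathbf{x}}-\mathbf{y})^H\mathbf{n}]$; carrying that $-2$ through, the self terms contribute $+2MN_0$ and the denoiser terms contribute $-N_0\sum_m(\cdots)$, so the literal left-hand side equals the \emph{negative} of the stated right-hand side. (Sanity check with $\hat{\mathbf{x}}\equiv\mathbf{0}$: the left side is $2\mathbb{E}[\|\mathbf{n}\|^2]=+2MN_0$, while the right side is $-2MN_0$.) The identity that is actually true, and the one Theorem~\ref{thm:thm1} needs---since expanding $\|(\hat{\mathbf{x}}-\mathbf{y})+(\mathbf{y}-\mathbf{x})\|^2$ produces the cross term $2\,\mathrm{Re}\{(\hat{\mathbf{x}}-\mathbf{y})^H(\mathbf{y}-\mathbf{x})\}$---has $\mathbf{y}-\mathbf{x}$ where the statement writes $\mathbf{x}-\mathbf{y}$; the paper makes the same sign slip in \eqref{eqn:msederived}, and the two errors cancel so the final SURE formula is correct. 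Your bookkeeping, which asserts the self terms yield $-2M$ ``once the overall factor of $2$ is restored,'' only works with a $+2$ prefactor, so you have in effect proved the corrected identity. That is the right thing to prove, but you should state and prove it with the corrected sign rather than as printed, or your coordinate-wise computation will not literally reproduce the displayed equality.
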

\begin{proof}
    Please refer to \cite{ref:steinslemma}.
\end{proof}
By substituting the result from Lemma \ref{thm:steinslemma} into the final term in \eqref{eqn:msederived}, we obtain the following expression,
\begin{align}
    &\frac{1}{M}\mathbb{E}[\|\hat{\mathbf{x}}(\mathbf{y}) - \mathbf{y}\|^2] + N_0 + \frac{2}{M}\mathbb{E}[(\hat{\mathbf{x}}(\mathbf{y}) - \mathbf{y})^H(\mathbf{x}-\mathbf{y})]\nonumber \\
    & = \mathbb{E}\bigg[\frac{1}{M}\|\hat{\mathbf{x}}(\mathbf{y})-\mathbf{y}\|^2+{N}_0 \nonumber\\ 
    & +\frac{{N}_0}{M}\sum_{m=1}^M \left( \frac{\partial\textnormal{Re}\{\hat{x}_m(y_m)\}}{\partial\textnormal{Re}\{y_m\}} + \frac{\partial\textnormal{Im}\{\hat{x}_m(y_m)\}}{\partial\textnormal{Im}\{y_m\}} -2 \right) \bigg] \nonumber \\
    & = \mathbb{E}[\mathcal{S}],
\end{align}
which completes the proof.\qed

\end{document}